\tikzstyle{none}=[inner sep=0pt]
\newcommand{\s}{\enspace}
\newcommand{\sub}{\subseteq}
\renewcommand{\tt}[1]{\mathtt{#1}}
\renewcommand{\bf}[1]{\mathbf{#1}}
\renewcommand{\cal}[1]{\mathcal{#1}}
\newcommand{\bb}[1]{\mathbb{#1}}
\newcommand{\N}{\bb{N}}
\newcommand{\ssxv}{\mathbb{S} \times (\mathbb{S}^X)^{V}}
\newtheorem{lemma}{Lemma}[section]
\newtheorem{proposition}{Proposition}[section]
\newtheorem{corollary}{Corollary}[lemma]
\newtheorem{definition}{Definition}[section]
\newtheorem{example}{Example}[definition]
\def\bR{\begin{color}{red}}
\def\e{\end{color}\xspace}
\title{Incremental Monoidal Grammars}
\author{Dan Shiebler\textsuperscript{1}, Alexis Toumi\textsuperscript{1}, Mehrnoosh Sadrzadeh\textsuperscript{2} \\ 
University of Oxford\textsuperscript{1}, UCL\textsuperscript{2}}
\date{\today}
\begin{document}
\maketitle


\begin{abstract}
  In this work we define formal grammars in terms of free monoidal categories, along with a functor from the category of formal grammars to the category of automata.
  Generalising from the Booleans to arbitrary semirings, we extend our construction to weighted formal grammars and weighted automata.
  This allows us to link the categorical viewpoint on natural language to the standard machine learning notion of probabilistic language model.
\end{abstract}


Recurrent neural networks and probabilistic language models \cite{BengioEtAl03} have become standard tools in the natural language processing (NLP) community.
These networks are inherently incremental, scanning through the sequence of words they are given and updating their prediction for what comes next.
Despite their practical success on hard language tasks such as translation and question answering, the structure underlying these machine learning models is yet poorly understood, and they are generally only used as black boxes.

On the other hand, the categorical compositional distributional (DisCoCat) models of Coecke et al. \cite{ClarkEtAl08,ClarkEtAl10,CoeckeEtAl13} use grammatical structure, explicitly encoded as string diagrams in a free monoidal category, to compute natural language semantics.
DisCoCat models have received experimental support on small-scale tasks  \cite{GrefenstetteSadrzadeh11,KartsaklisEtAl12,KartsaklisEtAl13}, but the extra mathematical structure makes them hard to scale to the billions of words used in the training of modern NLP models.

In this work, we aim to bridge this gap by constructing a functor which sends formal grammars (encoded as monoidal categories) to a state automaton that parses the grammar in an incremental way, reading one word at a time and updating the set of possible parsings.
In section \ref{preliminaries} we introduce some core definitions and preliminaries. In section \ref{grammars} we introduce monoidal grammars and $\mathbb{S}$-monoidal grammars for an arbitrary semiring $\mathbb{S}$. In section \ref{incrementalgrammars} we define the functor $\cal{I}$, from the category of $\mathbb{S}$-monoidal grammars to coalgebras of the weighted automata functor $\mathcal{W}(X) = \ssxv$. In section \ref{rgrammars}, we describe an algorithm for learning $\mathbb{R}_{\geq 0}$-monoidal grammars from a probabilistic language model.
This paper represents a work in progress, and we conclude with a discussion of future work.


\section{Preliminaries}\label{preliminaries}

\subsection{Formal Grammars}

Given a pair of sets $X$ and $Y$, we write $X \times Y$ and $X + Y$ for their Cartesian product and disjoint union respectively.
We use the Kleene star $X^* = \coprod_{n \in \N} X^n$ to denote the free monoid, with concatenation as product and the empty sequence $\epsilon \in X^0$ as unit.
Similarly, we write $X^+ = \coprod_{n > 0} X^n$ for the free semigroup (i.e. non-empty sequences).

If we fix a finite set of words $V$ called the \emph{vocabulary}, then a the language of a formal grammar over this set of words is a subset $L \sub V^\star$: the set of sequences of words that form grammatical sentences accoridng to the rules of the grammar.
A well-known class of formal grammars is \emph{context-free grammars}, introduced by Chomsky as the first level of his complexity-theoretic hierarchy \cite{Chomsky56, Chomsky57, Chomsky65}.

\begin{definition}
  A \emph{context-free grammar} is given by a 4-tuple $(V, X, R, s)$ where:
  \begin{itemize}
    \item $V$ is a finite set called the \emph{vocabulary} (or terminal symbols),
    \item $X$ is a finite set of \emph{syntactic variables} (or nonterminal symbols),
    \item $R \sub X \times (V + X)^*$ is a finite set of \emph{production rules},
    \item $s \in X$ is a syntactic variable called the \emph{start symbol}.
  \end{itemize}
\end{definition}

We define the language $L(G) \sub V^\star$ of a context-free grammar $G = (V, X, R, s)$ as follows: We first construct a binary relation $(\to_R) \sub (V + X)^* \times (V + X)^*$ where for any pair of strings $s, t \in (V + X)^*$ we have $s \to_R t$ if and only if there is some production rule $(x, v) \in R$ and a pair $u, w \in (V + X)^*$ such that $s = u x w$ and $t = u v w$. We then compute its reflexive transitive closure $\to_R^*$ and define $L(G) = \set{ u \in V^* \ \vert \ s \to_R^* u}$.
Note that this corresponds precisely to the preordered monoid generated by the production rules $R$, i.e. with generators $V + X$ and relations $x \leq u$ for each $(x, u) \in R$.

\begin{definition}
  A \emph{preordered monoid} is a monoid $X$ equipped with a reflexive transitive relation $\leq$ such that $v \leq v'$ implies $u v w \leq u v' w$ for all $u, v, v', w \in X$.
\end{definition}

If we generalise this to arbitrary preordered monoids, i.e. to grammars with production rules that have arbitrary sources $R$ (and not just the ones with an $R$ with restrictions as defined above),  we obtain abstract rewriting systems, also known as semi-Thue systems \cite{Thue14,Power13}.
If we forbid rules with empty source or target we obtain the word-problem for semigroups, shown to be undecidable independently by Markov \cite{Markov47} and Post \cite{Post47}.
Both are equivalent in computational power to the last level of Chomsky's hierarchy: unrestricted grammars (which are thus undecidable).

\begin{definition}
  An \emph{unrestricted grammar} is a tuple $G = (V, X, R, s)$ where $V$ and $X$ are finite sets called the terminals and non-terminals with $R \sub (V + X)^+ \times (V + X)^*$ and $s \in X$.
  We define $L(G) = \set{u \in V^* \ \vert \ s \leq u}$ the language of $G$, where the order is given by the preordered monoid generated by $R$.
\end{definition}

Empirically, the complexity of human languages has been shown to lie somewhere in between that of context-free and unrestricted grammars \cite{Shieber87}, see the line of work on midly context-sensitive grammars \cite{Joshi85, Weir88, Vijay-ShankerWeir94}.

We cannot infer the underlying structure of human languages from complexity-theoretic arguments: indeed, the same class of languages can be generated by non-isomorphic grammar formalisms.
As our main example, Lambek's \emph{pregroup grammars} \cite{Lambek99, Lambek01, Lambek08} have the same expressive power as context-free grammars \cite{BuszkowskiMoroz08}, i.e. there are back-and-forth translations which preserve the generated languages.
However, pregroup grammars and context-free grammars are only weakly equivalent: these translations do not give an isomorphism.

\begin{definition}
  A \emph{pregroup}\footnote{
  The original definition of a pregroup was a partial, rather than pre, ordered monoid, i.e. one with an order relation with the anti-symmetry axiom. Anti-symmetry  causes unwanted equalities between types and   preorders were suggested as an alternative.}
   is a preordered monoid $P$ equipped with a pair of functions $(-)^l, (-)^r : P \to P$ such that $t^l t \leq \epsilon \leq t t^l$ and $t t^r \leq \epsilon \leq t^r t$ for all $t \in P$.
\end{definition}

\begin{definition}
  A \emph{pregroup grammar} is a tuple $G = (V, B, D, s)$ where $V, B$ and $D \sub V \times P_B$ are finite sets and $s$ is an element of $B$, for $B$ a set of basic grammatical types and $P_B$ the free pregroup generated by $B$.
\end{definition}

In the above, $V$ is the Vocabulary, $B$ a set of basic grammatical types such as $\{n,s\}$, for $n$ the type of a noun phrase and $s$ that of a sentence, and $D$ a relation known as the type dictionary or the lexicon. The language $L(G)$ of a pregroup grammar $G = (V, B, D, s)$ is the set of all strings $w_1w_2 \cdots  w_n \in V^*$ such that  $t_1 \dots t_n \leq s$ for each  $(w_i, t_i) \in D$.

\subsection{Monoidal Categories}

We assume familiarity with monoidal categories, see e.g. \cite{Awodey06} for an introduction.
In order to fix some notation, we define monoidal \emph{signatures} and \emph{presentations}.

\begin{definition}
  A monoidal \emph{signature} is a tuple $\Sigma = (\Sigma_0, \Sigma_1, \rm{dom}, \rm{cod})$ where $\Sigma_0$ and $\Sigma_1$ are sets of generating \emph{objects} and \emph{arrows} respectively, and $\rm{dom}, \rm{cod} : \Sigma_1 \to \Sigma_0^*$ are pairs of functions called \emph{domain} and \emph{codomain}.
\end{definition}

A homomorphism of monoidal signatures $h : \Sigma \to \Sigma'$ is given by a pair of functions $h_0 : \Sigma_0 \to \Sigma'_0$ and $h_1 : \Sigma_1 \to \Sigma_1'$ such that the following diagram commutes:

\begin{center}
\begin{tikzcd}[column sep=1in,row sep=1in]
\Sigma_0 \arrow{d}{h_0} & \Sigma_1 \arrow{l}{cod} \arrow{r}{dom}  \arrow{d}{h_1} & \Sigma_0 \arrow{d}{h_0} \\
\Sigma'_0 & \Sigma'_1 \arrow{l}{cod} \arrow{r}{dom}  & \Sigma'_0
\end{tikzcd}
\end{center}

We write $\bf{MonSig}$ for the category of monoidal signatures and their homomorphisms.
The free monoidal category $\bf{C}_\Sigma$ generated by a monoidal signature $\Sigma$ is the image of the left adjoint to the forgetful functor from $\bf{MonCat}$ to $\bf{MonSig}$.
Explicitly, the arrows $f : x \to y$ in $\bf{C}_\Sigma$ are given by (planar progressive) \emph{string diagrams} with arrows in $\Sigma_1$ as nodes and $x, y \in \Sigma_0^\star$ as input/output, see \cite{JoyalStreet88, JoyalStreet91, Selinger10}.
In order to impose equations on diagrams, we need some extra data: a presentation.

\begin{definition}
  A \emph{presentation} for a monoidal category is given by a monoidal signature $\Sigma$ and a set of relations $R \sub \coprod_{u , t \in \Sigma_0^*} \bf{C}_\Sigma(u, t) \times \bf{C}_\Sigma(u, t)$ between parallel arrows of the free monoidal category.
\end{definition}

The monoidal category generated by $(\Sigma, R)$ is the quotient $\bf{C}_\Sigma / R$, see \cite{Lane98}[II.8].

\subsection{Coalgebras}

Coalgebras are a useful framework for categorically modeling dynamical systems. In functional programming terminology, some functor $F$ defines a type signature for a mapping on $\mathbf{Set}$, and an $F$-coalgebra is a particular implementation of this signature.

\begin{definition}
For an endofunctor $F: \mathbf{Set} \rightarrow \mathbf{Set}$, an \emph{$F$-coalgebra} is a pair $(X, g)$, where $X$ is a set and $g : X \rightarrow FX$ is a function that defines structure over $X$.
\end{definition}

\begin{example}
For $X$ a set of states and  $F$  the identity functor,  an $F$-coalgebra $(X, g)$  maps states to states, computing a function over states such as permutations.
\end{example}

\begin{example}
The powerset functor $F(X) = {\cal P}(X)$ maps a state to a set of states, computing the accessibility relation of a Kripke structure or the transitions of an automata. 
\end{example}

\begin{example}
The functor $D(X) = \mathbb{B} \times X^{V}$ maps states in $X$ to the product of $\{\mathbf{false}, \mathbf{true}\}$ and the set of functions from some set $V$ back to $X$. This signature is the curried form of a deterministic automata transition function: each state in $X$ is tagged with a boolean value indicating whether or not it is an accept state, and each state defines a function from a vocabulary $V$ to the next state.
\end{example}


\section{Monoidal Grammars}\label{grammars}

Our goal in this section will be to formally describe a characterization of formal grammars in terms of monoidal categories. We will begin with the following definition:

\begin{definition}
  A \emph{monoidal grammar} is a tuple $G = (\Sigma, R, s)$ where $\Sigma, R$ is the presentation of a monoidal category $\bf{C}_G$ with $V \cup \{s\} \in \Sigma_0$. We define
  $
  L(G) = \set{ u \in V^* \ \vert \ \exists \ f \in \bf{C}_G (u, s) }
  $.
\end{definition}

Given two monoidal grammars $G$ and $G'$ over a common vocabulary $V$, we will define the morphisms between them  to be the monoidal functors $F : C_G \to C_{G'}$ such that the following diagram commutes, where $inc$ and $inc'$ are the inclusion maps from $V^{*} \cup \{s\}$ into the objects of $\bf{C}_G$ and $\bf{C}_{G'}$ respectively:
\begin{center}
\begin{tikzcd}[column sep=1in,row sep=1in]
\bf{C}_G  \arrow{rd}{F} \\
V^{*} \cup \{s\} \arrow{u}{inc}  \arrow{r}{inc'} &  \bf{C}_{G'}
\end{tikzcd}
\end{center}

Monoidal grammars and their morphisms trivially form a subcategory of $(V + \set{s})^* / \bf{MonCat}$, which we will denote $\bf{Grammar}$. 

We can characterize the formal grammars that we described earlier as monoidal grammars. For example:
\begin{proposition}
  A pregroup grammar $G = (V, B, D, s)$ is a monoidal grammar with $\Sigma_0 = V + B \times \bb{Z}$, $\Sigma_1 = D + \coprod_{x \in B \times \bb{Z}} \set{ \tt{cup}_x, \tt{cap}_x }$\footnote{$\bb{Z}$ represents the number of adjoint applications, where negatives are left adjoints and positives are right adjoints.} and the relations $R$ given by the snake equations, see \cite{Selinger10}[4.1].
\end{proposition}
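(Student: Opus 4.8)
The plan is to exhibit the stated data as a monoidal grammar $G' = (\Sigma, R, s)$ and to verify that $L(G') = L(G)$. Here the start symbol $s \in B$ is read as the object $(s, 0) \in B \times \bb{Z}$, each dictionary entry $(w, t) \in D$ is read as a generating arrow $w \to t$ (with $t \in P_B \sub (B \times \bb{Z})^*$), and the cups and caps are taken to be $\tt{cap}_x : x^l x \to \epsilon$ and $\tt{cup}_x : \epsilon \to x x^l$ under the identification $(b, n)^l = (b, n-1)$ and $(b, n)^r = (b, n+1)$. Unfolding the definitions, the two languages are
\[
L(G) = \set{ w_1 \cdots w_n \ \vert \ \exists\, t_1, \dots, t_n : (w_i, t_i) \in D \text{ and } t_1 \cdots t_n \leq s } \quad\text{and}\quad L(G') = \set{ u \in V^* \ \vert \ \exists\, f \in \bf{C}_{G'}(u, s) },
\]
so the task reduces to matching reductions in the free pregroup $P_B$ with the inhabitation of hom-sets in $\bf{C}_{G'}$.

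First I would analyse the shape of an arbitrary morphism $f : w_1 \cdots w_n \to s$ in $\bf{C}_{G'}$. The objects of $V$ occur only as domains of dictionary arrows — the cups and caps are indexed by $B \times \bb{Z}$ and never touch a vocabulary wire, and no generator produces a $V$-object as output — so, reading $f$ as a string diagram, each input wire $w_i$ must be consumed by exactly one dictionary box $(w_i, t_i) \in D$. Using the interchange law to slide these $n$ mutually independent boxes to the bottom of the diagram, $f$ factors as $g \circ (\bigotimes_i d_i)$ where $d_i : w_i \to t_i$ is a dictionary arrow and $g : t_1 \cdots t_n \to s$ is built entirely from cups, caps, identities, composition and tensor. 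The problem is thereby reduced to showing that such a $g$ exists if and only if $t_1 \cdots t_n \leq s$ in $P_B$.

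The key lemma I would then prove is that, for all words $u, v \in (B \times \bb{Z})^*$, there is a cup/cap morphism $u \to v$ in $\bf{C}_{G'}$ if and only if $u \leq v$ in $P_B$. The relation $R_{\mathrm{cc}}$, with $u \mathrel{R_{\mathrm{cc}}} v$ whenever such a morphism exists, is reflexive (identities), transitive (composition) and compatible with concatenation (tensor), hence a preordered-monoid relation on $(B \times \bb{Z})^*$; moreover, as $x$ ranges over all of $B \times \bb{Z}$, the arrows $\tt{cap}_x$ and $\tt{cup}_x$ witness exactly the four pregroup inequalities $x^l x \leq \epsilon \leq x x^l$ and $x x^r \leq \epsilon \leq x^r x$ — a single cap family already supplies the right-dual reductions, since $x$ runs over all adjoint indices. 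As $\leq$ is the least such relation, this gives $u \leq v \Rightarrow u \mathrel{R_{\mathrm{cc}}} v$. Conversely, a straightforward induction on the construction of a cup/cap morphism — identities giving reflexivity, the generators giving the axioms, composition giving transitivity, and tensor giving monoidal compatibility — yields $u \mathrel{R_{\mathrm{cc}}} v \Rightarrow u \leq v$. Hence $R_{\mathrm{cc}}$ coincides with $\leq$, and combined with the factorization above this gives $L(G') = L(G)$.

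The step I expect to be the main obstacle is making the factorization of the second paragraph fully rigorous: it relies on the coherence / normal-form theory of planar string diagrams in free (rigid) monoidal categories to guarantee that tensor and composition can always be rearranged so that all dictionary boxes sit below the cup/cap fragment. The snake equations in $R$ are what make the cups and caps genuine adjunction (co)units, so that the cup/cap part of $\bf{C}_{G'}$ is the free rigid monoidal category on $B$; since passing to the quotient by $R$ only merges parallel arrows, it leaves the inhabitation of every hom-set — and hence the language — unchanged, yet $R$ is needed for the construction to land in the intended rigid category. Matching the $\bb{Z}$-indexed adjoints to the orientations of the cups and caps, including the order reversal $(t_1 \cdots t_k)^l = t_k^l \cdots t_1^l$ for compound types, is then routine bookkeeping.
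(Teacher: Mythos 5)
Your proposal is correct, but there is nothing in the paper to compare it against: the paper states this proposition without any proof at all, treating the signature data $\Sigma_0$, $\Sigma_1$, $R$ as a self-evident translation and citing Selinger only for the snake equations. What you actually prove --- that the encoding preserves the language, i.e.\ $t_1 \cdots t_n \leq s$ holds in the free pregroup if and only if $\bf{C}_{G'}(w_1 \cdots w_n, s)$ is inhabited --- is the substantive content the paper leaves implicit, and it is essentially the Lambek--Preller correspondence between pregroup reductions and morphisms in free rigid monoidal categories. Your two-step decomposition (factor any parse through the dictionary boxes, then match the cup/cap fragment against the pregroup preorder) is the right one, and your observation that quotienting by the snake equations only merges parallel arrows, hence cannot change which hom-sets are inhabited, is exactly the point that makes the language claim insensitive to $R$. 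Two caveats. First, in the forward direction of your key lemma, showing that $R_{\mathrm{cc}}$ is a pregroup preorder requires the axioms $t^l t \leq \epsilon \leq t t^l$ for \emph{all} $t \in P_B$, not only for simple types $x \in B \times \bb{Z}$; you need the standard nested cup/cap construction for compound types (e.g.\ $(uv)^l(uv) = v^l u^l u v \to v^l v \to \epsilon$), which you currently fold into ``routine bookkeeping'' --- acceptable, but it belongs inside the lemma rather than the epilogue; alternatively, invoke Lambek's normalization result that the free pregroup order is already generated by contractions and expansions on simple types. Second, the factorization argument should be stated as carried out on a representative diagram in the free monoidal category and then descended to the quotient; as you yourself flag, this is where planar coherence does the real work, and it is the one step whose omission would leave a genuine gap.
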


\begin{proposition}
  A context-free grammar $G = (V, X, R_G, s)$ is a monoidal grammar with $\Sigma_0 = V + X$, $\Sigma_1 = \{u \to x | \ (x, u) \in R_G\}$, and $R=\varnothing$.
\end{proposition}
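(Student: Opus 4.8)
The plan is first to check that the stated data genuinely presents a monoidal grammar, and then to verify that the monoidal notion of language agrees with the context-free one. Well-formedness is immediate: since $s \in X$ we have $V \cup \set{s} \sub V + X = \Sigma_0$, and $R = \varnothing$ is a (trivially valid) set of relations, so $\bf{C}_\Sigma / R = \bf{C}_\Sigma$ is a monoidal category with the required object inclusion. The content of the statement is therefore the equality of languages $\set{u \in V^* \mid \exists f \in \bf{C}_\Sigma(u, s)} = \set{u \in V^* \mid s \to_{R_G}^* u}$, which I would isolate as the claim that a morphism $w \to s$ exists in $\bf{C}_\Sigma$ if and only if $s \to_{R_G}^* w$.

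To prove this, I would compare two preorders on the object monoid $\Sigma_0^* = (V+X)^*$. On the one hand, the excerpt already identifies the context-free derivation relation with the free preordered monoid on $V + X$ with relations $x \leq u$ for each $(x,u) \in R_G$, so that $s \to_{R_G}^* w \iff s \leq w$. On the other hand, declaring $a \sqsubseteq b$ whenever $\bf{C}_\Sigma(a,b) \neq \varnothing$ defines a preorder — reflexive by identities, transitive by composition — which is moreover a congruence for concatenation, since tensoring a morphism with identities yields arrows $c \otimes a \to c \otimes b$ and $a \otimes c \to b \otimes c$. Because each generator of $\Sigma_1$ is an arrow $u \to x$, this $\sqsubseteq$ is exactly the preordered monoid generated by the reversed relations $u \sqsubseteq x$; matching directions gives $w \sqsubseteq s \iff s \leq w$, which is the desired equivalence.

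The equality of these preorders I would split into two inclusions. Completeness (every rewrite yields a morphism) is the routine direction: a single step $a x b \to a v b$ using $(x,v) \in R_G$ is witnessed by the morphism $\mathrm{id}_a \otimes g \otimes \mathrm{id}_b : a v b \to a x b$, where $g \in \Sigma_1$ is the generator with domain $v$ and codomain $x$, and a derivation $s \to_{R_G}^* w$ is then witnessed by composing the (reversed) witnesses of its steps into a single arrow $w \to s$. Soundness (every morphism yields a rewrite) I would prove by induction on the construction of a string diagram: generators give single rewrite steps, composites give concatenations of rewrite sequences, and tensors give rewriting in context, i.e. applications of the congruence property.

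The main obstacle is the soundness direction, where one must check that nothing beyond the reflexive–transitive–congruence closure can be derived — in particular that the coherence isomorphisms and the interchange law of the monoidal structure do not manufacture extra morphisms between objects. Conceptually this is guaranteed because those laws become trivial once hom-sets are collapsed to truth values: the preorder reflection $\bf{MonCat} \to \bf{PreMon}$, sending a strict monoidal category to its object monoid ordered by morphism existence, is a left adjoint, and hence carries the free monoidal category $\bf{C}_\Sigma$ to the free preordered monoid on the same generating data. Making this adjoint argument precise, or equivalently discharging the inductive check that interchange introduces no new order relations, is the one place that requires genuine care; everything else is bookkeeping.
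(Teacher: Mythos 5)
Your proof is correct, but note that the paper offers essentially no proof of this proposition: it is stated as a bare construction, with all verification left implicit. The only ingredient the paper does supply is the earlier remark that the CFG derivation relation $\to_{R_G}^*$ is precisely the preordered monoid presented by generators $V+X$ and relations $x \leq u$ for $(x,u) \in R_G$ --- the same fact you invoke when setting up your comparison of preorders. What you add, and what the paper silently assumes, is the equivalence between arrow-existence in $\mathbf{C}_\Sigma$ and derivability: your completeness direction (a rewrite step $axb \to avb$ becomes $\mathrm{id}_a \otimes g \otimes \mathrm{id}_b$, and a derivation composes in reverse into an arrow $w \to s$) and your soundness direction (induction on the construction of diagrams) together show that the monoidal language coincides with the context-free one, which is what makes the proposition meaningful rather than a mere re-labelling of data. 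One simplification: your worry that interchange and coherence might ``manufacture'' extra morphisms is overcautious in the setting the paper fixes. The paper's $\mathbf{C}_\Sigma$ is the free \emph{strict} monoidal category, whose arrows are planar string diagrams; the monoidal axioms only quotient the set of formal composites built from generators, identities, $\circ$ and $\otimes$ --- they never create arrows between new pairs of objects. Since your soundness predicate depends only on the domain and codomain of an expression, it is automatically invariant under this quotient, so induction over representative expressions already suffices; the adjoint argument (which is valid --- the thin reflection of strict monoidal categories into preordered monoids is indeed a left adjoint, and composing adjunctions identifies the reflection of $\mathbf{C}_\Sigma$ with the presented preordered monoid) is an elegant alternative rather than a necessary repair.
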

We can now express the languages of these constructions in a unified manner:
\begin{definition}
For $w_1 w_2 ...  w_n \in V^{*}$ we call an arrow $w_1  w_2 ...  w_n \to r$ where $r \in Obj(\bf{C}_G)$ a \emph{parse state} of $w_1 w_2 ...  w_n$. Whenever in a parse state we have $r = s$, we call that parse state a  \emph{parsing} or a \emph{derivation}.
\end{definition}

\begin{definition}
The \emph{language} of a monoidal grammar is the set of all strings $w_1 \otimes w_2 ...  w_n$ that have a parsing.
\end{definition}


We can think of the string diagrams of parse states and parsings as akin to partially constructed and fully constructed syntax trees respectively.


\begin{example}
Let $G = (V, B, D, s)$ be a pregroup grammar with $V = \set{\rm{Alice}, \rm{loves}, \rm{Bob}}$, $B = \set{s, n}$ and $D = \set{(\rm{Alice}, n), \s (\rm{loves}, n^r s n^l), \s (\rm{Bob}, n)}$. Then the following string diagram represents the arrow in $\bf{C}_G$ from $Alice \otimes loves \otimes Bob$ to $s$:\\
\begin{center}
    \begin{tikzpicture}[scale=0.55, baseline=(O.base)]

    \node (0) at (-3, 1) {};
    \node [scale=0.5, circle, fill=black] (1) at (-3, 0) {};
    \node [scale=0.5, circle, fill=black] (2) at (0, 0) {};
    \node (3) at (0, 1) {};
    \node (4) at (3, 1) {};
    \node [scale=0.5, circle, fill=black] (5) at (3, 0) {};
    \node (6) at (0, -2) {};
    \node (O) at (-3, 1.5) {Alice};
    \node (8) at (0, 1.5) {loves};
    \node (9) at (3, 1.5) {Bob};

    \node (13) at (-1.5, -1.5) {};
    \node (14) at (1.5, -1.5) {};

    \node (10) at (-3.5, -0.5) {$n$};
    \node (11) at (3.5, -0.5) {$n$};
    \node (12) at (0.5, -1.75) {$s$};
    \node (15) at (-0.75, -0.5) {$n^l$};
    \node (16) at (0.75, -0.5) {$n^r$};

		\draw (0.center) to (1.center);
		\draw (3.center) to (2.center);
		\draw (4.center) to (5.center);
		\draw (2.center) to (6.center);
		\draw [bend right=45] (1.center) to (13.center);
		\draw [bend right=45] (13.center) to (2.center);
		\draw [bend right=45] (2.center) to (14.center);
		\draw [bend right=45] (14.center) to (5.center);
\end{tikzpicture}

\end{center}
\end{example}


\begin{example}
    Let $G = (V, X, R, s)$ be a context-free grammar with $V$ and $X$ defined as:
    \begin{align*}
        V = \set{\rm{Complex}, \rm{houses}, \rm{students}, \rm{disappoint}}\\
        X = \set{s, np, adj, vp, itv, tv}
    \end{align*}
    for $s$ the non-terminal representing a sentence, $np$ the non-terminal representing a  noun phrase, similarly, $adj$ an adjective, $itv$ an intransitive verb, $tv$ a transitive verb, and $vp$ a verb phrase. 
    Where the production rules are:
    \begin{align*}
        s \to np \ vp, \s vp \to tv \ np, \s vp \to itv,   \s np \to adj \ np\\
        np \to \text{complex}, \s adj \to \text{complex},\\
        \s np \to \text{houses}, \s itv \to \text{houses}, \\
        \s np \to \text{students}, \s itv \to \text{disappoint}
    \end{align*}
    Then the following string diagrams represent the arrows in $\bf{C}_G$ from $Complex \otimes houses \otimes students$ to $s$ and from $Complex \otimes houses \otimes disappoint$ to $s$:
    $$\begin{tikzpicture}[scale=0.5]
		\node (0) at (-3, 0) {};
		\node (1) at (0, 0) {};
		\node (2) at (3, 0) {};
		\node [scale=0.5, circle, fill=black] (3) at (-3, 1) {};
		\node [scale=0.5, circle, fill=black] (4) at (0, 1) {};
		\node [scale=0.5, circle, fill=black] (5) at (3, 1) {};
		\node [scale=0.5, circle, fill=black] (6) at (1.5, 2.5) {};
		\node [scale=0.5, circle, fill=black] (7) at (0, 4) {};
		\node (8) at (-3, -1) {Complex};
		\node (9) at (0, -1) {houses};
		\node (10) at (3, -1) {students};
		\node (11) at (0, 5) {};
		\node (12) at (0, 6) {$s$};
		\node (13) at (2.75, 2) {$np$};
		\node (14) at (-1.25, 3.75) {$np$};
		\node (15) at (0.25, 2) {$tv$};
		\node (16) at (1.25, 3.75) {$vp$};

		\draw (11.center) to (7.center);
		\draw (7.center) to (6.center);
		\draw (6.center) to (4.center);
		\draw (6.center) to (5.center);
		\draw (7.center) to (3.center);
		\draw (3.center) to (0.center);
		\draw (4.center) to (1.center);
		\draw (5.center) to (2.center);
\end{tikzpicture}
\qquad \qquad
\begin{tikzpicture}[scale=0.5]
		\node (0) at (-3, 0) {};
		\node (1) at (0, 0) {};
		\node (2) at (3, 0) {};
		\node [scale=0.5, circle, fill=black] (3) at (-3, 1) {};
		\node [scale=0.5, circle, fill=black] (4) at (0, 1) {};
		\node [scale=0.5, circle, fill=black] (5) at (3, 1) {};
		\node [scale=0.5, circle, fill=black] (5) at (3, 2.5) {};
		\node [scale=0.5, circle, fill=black] (7) at (0, 4) {};
		\node (8) at (-3, -1) {Complex};
		\node (9) at (0, -1) {houses};
		\node (10) at (3, -1) {disappoint};
		\node (11) at (0, 5) {};
		\node (12) at (0, 6) {$s$};
		\node [scale=0.5, circle, fill=black] (13) at (-1.5, 2.5) {};
		\node (14) at (-1.25, 3.75) {$np$};
		\node (15) at (1.25, 3.75) {$vp$};
		\node (16) at (-2.5, 2.25) {$adj$};
		\node (17) at (-0.5, 2.25) {$np$};
		\node (15) at (2.25, 2.0) {$itv$};
		\draw (11.center) to (7.center);
		\draw (3.center) to (0.center);
		\draw (4.center) to (1.center);
		\draw (5.center) to (2.center);
		\draw (7.center) to (13.center);
		\draw (13.center) to (3.center);
		\draw (13.center) to (4.center);
		\draw (7.center) to (5.center);
\end{tikzpicture}
$$

\end{example}


\subsection{$\mathbb{S}$-Monoidal Grammars}\label{weightedgrammars}
We can generalize monoidal grammars to associate weights (such as likelihoods or probabilities) with each parse state.

\begin{definition}
  For some semiring $\mathbb{S}$, an \emph{$\mathbb{S}$-monoidal signature} is a monoidal signature $\Sigma = (\Sigma_0, \Sigma_1, dom, cod)$ equipped with a mapping $r: \Sigma_1 \to \mathbb{S}$.
\end{definition}
A homomorphism of $\mathbb{S}$-monoidal signatures $h : (\Sigma, r) \to (\Sigma', r')$ is given by a monoidal signature homomorphism $h_0,h_1$ such that the following diagram commutes:

\begin{center}
\begin{tikzcd}[column sep=1in,row sep=1in]
\Sigma_1  \arrow{r}{r} \arrow{rd}{h_1} & \mathbb{S} \\
&  \Sigma_1'  \arrow{u}{r'}
\end{tikzcd}
\end{center}

\begin{definition}
  An \emph{$\mathbb{S}$-monoidal grammar} $(G,r)$ is a tuple $(\Sigma, R, s, r)$ such that $(\Sigma, r)$ is an $\mathbb{S}$-monoidal signature and $G = (\Sigma, R, s)$ is a monoidal grammar.
\end{definition}
We can extend $r$ into a function from $Ar(\bf{C}_G) \to \mathbb{S}$ by defining $r(g \circ f) = r(g)*r(f)$ and $r(f_1 \otimes f_2) = r(f_1)*r(f_2)$. If we do this, then $r$ defines a functor from $\bf{C}_G$ to the trivial monoidal category where arrows are the elements of $\mathbb{S}$ and both composition and tensor product are semiring multiplication.

We will refer to the category of $\mathbb{S}$-monoidal grammars over a common vocabulary as $\bf{Grammar}_\mathbb{S}$. The morphisms in this category are monoidal functors $F : \bf{C}_G  \to \bf{C}_{G'}$ such that the following diagram commutes:

\begin{center}
    \begin{tikzcd}[column sep=1in,row sep=1in]
    \bf{C}_G  \arrow{r}{r} \arrow{rd}{F}  & \mathbb{S}\\
    V^{*} \cup \{s\} \arrow{u}{inc}  \arrow{r}{inc'} &
    \bf{C}_{G'} \arrow{u}{r'}
    \end{tikzcd}
\end{center}

Let us note that we can characterize any monoidal grammar $G$ as the $\mathbb{B}$-monoidal grammar $(G, r)$ where $\mathbb{B}$ is the semiring of Booleans and  $r$ is the constant functor that maps all arrows in $\bf{C}_G$ to $true$ in $\mathbb{B}$.

\section{Incremental Monoidal Grammars}\label{incrementalgrammars}

In an $\mathbb{S}$-monoidal grammar, we utilize the categorical composition to model the sequential nature of the parsing process and we use the tensor-product to model both the adjacency of words/types in a sentence and the parallel application of processing steps to different parts of a sentence. In this section we will explore how we can study the behavior of the tensor-product as a dynamic construct.

We can define an action of the free monoid $V^{*}$ on the category of endofunctors of $\bf{C}_G$ to model the process of adding a ``new'' word to our system. For any word $w \in V$, the endofunctor $W_w$ maps the object $o$ to $o \otimes w$ and the arrow $a$ to $a \otimes id_{w}$.
In order to model the interpretation of this new word in context, we look at the interaction of this endofunctor with the arrows in $\bf{C}_G$ by defining a mapping $W^{*}_w(a): Ar(\bf{C}_G) \rightarrow \mathbb{S}^{Ar(\bf{C}_G)}$ as follows.
\begin{align*}
    W^{*}_w(a)(a') = \begin{cases}
      r(a') &  a' \in X \\
      0 &  a' \not\in X
   \end{cases}\\
   X = \{a' \circ W_w(a)\  |\  a' \in Ar(C), dom(a')=(cod(a) \otimes w)\}
\end{align*}
Intuitively, $W^{*}_w(a)(a')$ is the generalized ``likelihood'' of the parse state $a' \circ W_w(a)$, given $W_w(a)$.
Naturally, parse states that cannot be expressed as $a' \circ W_w(a)$ have a likelihood of $0$ given $W_w(a)$.

If we define a ``listener'' to be an entity that assigns the parse state $a$ to the string of words $w_1  w_2 ...  w_n$, then $W^{*}_w$ defines how a new word may update a listener's parse state. We can represent this process with an $\mathbb{S}$-weighted automaton. The states in the machine are parse states $a: w_1  w_2 ...  w_n \rightarrow o$, and the transition function is $W^{*}_w$. $\mathbb{S}$-weighted automaton are also equipped with a function $r_0$ that maps states to $\mathbb{S}$. For example, in the $\mathbb{B}$ case $r_0$ picks out the ``accept states'' of the automaton. For some $\mathbb{S}$-monoidal grammar $(\bf{C}_G, r)$ we will define this function $r_0$ as follows:
\begin{align*}
    r_0(a) = \begin{cases}
      r(a) &  cod(a) = s \\
      0 &   cod(a) \neq s
   \end{cases}
\end{align*}





In order to formalize this relationship,  we will use the theory of $F$-coalgebras on $\mathbf{Set}$  \cite{Rutten:1996:UCT:869662}. An $\mathbb{S}$-weighted automaton is a coalgebra of the functor $\mathcal{W}(X) = \ssxv$ and by \cite{SilvaEtAl13} a given transition function will uniquely define coalgebras of $\mathcal{W}$ up to bisimulation, so we simply need to define a mapping $\mathcal{I}$ between $\bf{Grammar}_\mathbb{S}$ and coalgebras of $\mathcal{W}$.

\begin{definition}
The \emph{Incremental Functor} $\mathcal{I}$ is a map on the objects and arrows of $\bf{Grammar}_\mathbb{S}$ that acts as follows:
\begin{itemize}
    \item For some object $(\bf{C}_G, r)$ in $\bf{Grammar}_\mathbb{S}$, we define $\mathcal{I}(\bf{C}_G, r)$ to be the coalgebra $(Ar(\bf{C}_G), r_0 \times W^{*}_w)$.
    \item For some morphism $F$ between $(\bf{C}_G, r_0)$ and $(\bf{C}_G', r'_0)$ in $\bf{Grammar}_\mathbb{S}$, we define $\mathcal{I}(F)$ to be the map $h_F: Ar(\bf{C}_G) \rightarrow Ar(\bf{C}_G')$ that uses $F$'s mapping on arrows.
\end{itemize}
\end{definition}

\begin{lemma}\label{incrementalproof}
The map $\mathcal{I}$ is a functor from $\bf{Grammar}_\mathbb{S}$ to the category of coalgebras of $\mathcal{W}$.
\end{lemma}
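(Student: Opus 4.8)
The plan is to verify the three conditions that make $\mathcal{I}$ a functor into the category of $\mathcal{W}$-coalgebras: that it sends each object to a genuine $\mathcal{W}$-coalgebra, that each $h_F$ is a coalgebra homomorphism, and that it preserves identities and composition. Preservation of identities and composition should be the easy part: since $h_F$ is by definition nothing but the action of the monoidal functor $F$ on arrows, and composition and identities of coalgebra homomorphisms are computed as ordinary composition and identities of the underlying functions on $Ar(\bf{C}_G)$, the equalities $\mathcal{I}(id) = id$ and $\mathcal{I}(F' \circ F) = \mathcal{I}(F') \circ \mathcal{I}(F)$ follow immediately from the corresponding facts for $F$. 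Well-definedness on objects is likewise a matter of unwinding $\mathcal{W}(Ar(\bf{C}_G)) = \mathbb{S} \times (\mathbb{S}^{Ar(\bf{C}_G)})^V$: the component $r_0$ supplies the $\mathbb{S}$ factor and the assignment $w \mapsto W^{*}_w(a)$ supplies the $(\mathbb{S}^{Ar(\bf{C}_G)})^V$ factor, so $r_0 \times W^{*}$ has exactly the type $Ar(\bf{C}_G) \to \mathcal{W}(Ar(\bf{C}_G))$.

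The substance of the proof is checking that each $h_F$ satisfies the coalgebra-homomorphism square
\[
\mathcal{W}(h_F) \circ (r_0 \times W^{*}) \;=\; (r'_0 \times {W'}^{*}) \circ h_F .
\]
I would first pin down the action of $\mathcal{W}$ on a function $h$, namely the identity on the $\mathbb{S}$ factor and, on each exponential factor $\mathbb{S}^{Ar(\bf{C}_G)}$, the pushforward $h_*(\phi)(b') = \sum_{h(b) = b'} \phi(b)$ (this is the covariant reading of the $(\mathbb{S}^{(-)})^V$ part, and it is where a finite-support or finite-fibre hypothesis on arrows is needed for the sum to be defined). The square then splits into two component equalities evaluated at a state $a$. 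The acceptance component reduces to $r_0(a) = r'_0(h_F(a))$, which follows from the two triangles defining a morphism of $\bf{Grammar}_\mathbb{S}$: the inclusion triangle gives $F(s) = s$, so $cod(a) = s$ iff $cod(h_F(a)) = s$, and the weight triangle gives $r' \circ F = r$, so the surviving weights agree.

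The transition component is the crux. Here I would use that $F$ is monoidal and fixes the vocabulary, so $F(w) = w$ and $F(a'' \circ (a \otimes id_w)) = F(a'') \circ (h_F(a) \otimes id_w)$; this shows $h_F$ maps the successor set $S_a = \{a'' \circ (a \otimes id_w)\}$ into the successor set $S_{h_F(a)}$, and weight-preservation gives $r(b) = r'(h_F(b))$ for each such $b$. The delicate point — and the step I expect to be the main obstacle — is that the pushforward aggregates weights over fibres, so the identity $(h_F)_*(W^{*}_w(a)) = {W'}^{*}_w(h_F(a))$ holds on the nose only if $h_F$ restricts to a bijection between $S_a$ and $S_{h_F(a)}$; otherwise a target successor with two preimages receives weight $2\,r'(b')$ rather than $r'(b')$. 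Establishing this bijection requires $F$ to be full and faithful on the relevant hom-sets, which is not guaranteed for an arbitrary monoidal functor. I would therefore either restrict the morphisms of $\bf{Grammar}_\mathbb{S}$ so that $F$ is full and faithful on successor slices, or — following the remark that a transition function determines a $\mathcal{W}$-coalgebra only up to bisimulation \cite{SilvaEtAl13} — interpret the target as coalgebras up to bisimulation, in which case $h_F$ need only be behaviour-preserving and the weight multiplicities can be absorbed. Making this choice precise, and checking that it remains compatible with composition, is the part of the argument that demands the most care.
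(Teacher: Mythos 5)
Your proposal follows the paper's own argument step for step: preservation of identities and composition is dismissed exactly as the paper does it, since $h_F$ is literally $F$'s action on arrows; the acceptance component $r'_0(h_F(a)) = r_0(a)$ is obtained from the two commuting triangles defining a morphism of $\bf{Grammar}_\mathbb{S}$; and the transition component is attacked, as in the paper, by using the fact that $F$ is monoidal and fixes the vocabulary to show that $h_F$ carries each successor $a'' \circ (a \otimes id_w)$ of $a$ to a successor $F(a'') \circ (h_F(a) \otimes id_w)$ of $h_F(a)$, with $r' \circ h_F = r$ matching the weights.

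The point at which you stop, however, is precisely the point at which the paper's proof is not sound as written, so your hesitation is a diagnosis of the paper rather than a defect of your attempt. The paper rewrites $X'$ as $\{h_F(a') \circ (h_F(a) \otimes id_w) \mid \ldots\}$, i.e.\ as the image under $h_F$ of the successor set of $a$, and then asserts that ``this makes it clear'' that $\mathbb{S}^{h_F}$ applied to $W^{*}_w(a)$ equals $W'^{*}_w(h_F(a))$. But $W'^{*}_w(h_F(a))$ assigns the (generally nonzero) weight $r'(b')$ to \emph{every} arrow $b' = c' \circ (h_F(a) \otimes id_w)$ with $c'$ ranging over all of $Ar(\bf{C}_{G'})$, not merely over arrows in the image of $F$, whereas the pushforward of $W^{*}_w(a)$ along $h_F$ vanishes outside that image; and if $h_F$ identifies two distinct successors of $a$, the pushforward assigns their target the weight $r'(b') + r'(b')$ where $r'(b')$ is required. (The paper's displayed definition of $X'$ even contains the tell-tale type error of quantifying $a' \in Ar(\bf{C}_{G'})$ while composing $a'$ with an arrow of $\bf{C}_G$.) Equality on the nose therefore needs exactly what you identify: $h_F$ must restrict to a bijection between successor sets (e.g.\ $F$ full and faithful on the relevant hom-sets), or $\mathbb{S}$ must have idempotent addition (as in $\mathbb{B}$, where the paper's computation does go through), or the statement must be weakened to an identification up to bisimulation, in the spirit of the paper's own appeal to \cite{SilvaEtAl13}. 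Either of your proposed repairs would make the lemma true; the paper, as written, does not notice that one is needed.
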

\begin{proof}
Since $\mathcal{I}(F)$ uses the functor $F's$ mapping on arrows, $\mathcal{I}$ trivially preserves identity and composition. All that remains is to prove that $\mathcal{I}(F) = h_F$ is indeed a coalgebra homomorphism. To do this we need to show that
the following diagram commutes:

\newcommand{\rw}{r_0 \times W^{*}_w}
\newcommand{\rwp}{r'_0 \times W'^{*}_w}
\newcommand{\ssh}{\mathbb{S} \times (\mathbb{S}^{h_F})^{V}}

\begin{center}
\begin{tikzcd}[column sep=1in,row sep=1in]
Ar(\bf{C}_G) \arrow{d}{\rw} \arrow{r}{h_F} & Ar(\bf{C}_G') \arrow{d}{\rwp} \\
\mathbb{S} \times (\mathbb{S}^{Ar(\bf{C}_G)})^{V}  \arrow{r}{ \ssh} &  \mathbb{S} \times (\mathbb{S}^{Ar(\bf{C}_G')})^{V}
\end{tikzcd}
\end{center}
When we apply the function $\ssh$ to $\rw$ we get the following function:
\begin{align*}
      \ssh(\rw)(a) = r'_0(h_F(a)) \times \lambda a' . \begin{cases}
      r'(h_F(a')) &  h_F(a') \in X' \\
      0 &  h_F(a') \not\in X'
   \end{cases} \\
   X' = \{h_F(a' \circ (a \otimes id_{w}))\  |\  a' \in Ar(\bf{C}_G'), dom(a')=(cod(a) \otimes w)\}
\end{align*}
Here $r'_0(h_F(a)) = r_0(a)$ because $F$ and $r$ commute in $\bf{Grammar}_\mathbb{S}$. When we use the fact that $h_F$ preserves composition and tensor product
we can rewrite $X'$ as:
\begin{align*}
     X' = \{h_F(a') \circ (h_F(a) \otimes id_{w})\  |\  h_F(a') \in Ar(\bf{C}_G'), dom(h_F(a'))=(cod(h_F(a)) \otimes w)\}
\end{align*}
This makes it clear that:
\begin{align*}
W'^{*}_w(h_F(a))(a') = \begin{cases}
  r'(h_F(a')) &  h_F(a') \in X' \\
  0 &  h_F(a') \not\in X'
\end{cases}
\end{align*}
Therefore, we can conclude that:
\begin{align*}
\ssh(\rw) = (\rwp) \circ h_F
\end{align*}
So $h_F$ is indeed a coalgebra homomorphism.

\end{proof}



The incremental functor reveals an interesting duality. Given a monoidal category that describes how laterally composed (tensor product) objects are processed vertically (by composition of arrows), we get for free an automaton that defines a laterally evolving process upon vertically composed arrows. This automaton models how the process that the monoidal category describes will behave in the face of new information, and it reveals the monoidal category's underlying incremental structure. This suggests the following corollary:
\begin{corollary}
If the $\mathbb{S}$-monoidal grammars $(\bf{C}_G,r)$ and $(\bf{C}_G',r')$ have functors between them, then the coalgebras $\mathcal{I}(\bf{C}_G,r)$ and $\mathcal{I}(\bf{C}_G',r')$ are bisimulatable.
\end{corollary}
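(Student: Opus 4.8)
The plan is to reduce the statement to Lemma~\ref{incrementalproof} together with the standard coalgebraic fact that the graph of a coalgebra homomorphism is a bisimulation. A morphism between the two grammars is by definition a monoidal functor $F$, and by the definition of the Incremental Functor its image $\mathcal{I}(F) = h_F : Ar(\bf{C}_G) \rightarrow Ar(\bf{C}_G')$ is, by Lemma~\ref{incrementalproof}, a homomorphism of $\mathcal{W}$-coalgebras from $\mathcal{I}(\bf{C}_G, r)$ to $\mathcal{I}(\bf{C}_G', r')$. So it suffices to extract, from this single homomorphism, a bisimulation relating the two coalgebras; in particular only one functor (in one direction) is needed.

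First I would fix the notion of bisimulation for $\mathcal{W}$-coalgebras, in the sense of \cite{Rutten:1996:UCT:869662,SilvaEtAl13}: a relation $R \sub Ar(\bf{C}_G) \times Ar(\bf{C}_G')$ that can be equipped with a structure map $\gamma : R \rightarrow \mathcal{W}(R)$ making both projections $\pi_1$ and $\pi_2$ into coalgebra homomorphisms, in which case the two coalgebras are \emph{bisimulatable}. I would then take $R$ to be the graph $\Gamma_{h_F} = \set{ (a, h_F(a)) \ \vert \ a \in Ar(\bf{C}_G) }$. Since $\pi_1 : \Gamma_{h_F} \rightarrow Ar(\bf{C}_G)$ is a bijection, I can transport the structure map $r_0 \times W^{*}_w$ along $\pi_1^{-1}$ to define $\gamma$; a one-line diagram chase using functoriality of $\mathcal{W}$ then shows that $\pi_1$ is a coalgebra isomorphism, and since $\pi_2 = h_F \circ \pi_1$ is a composite of two homomorphisms it is itself a homomorphism. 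Hence $\Gamma_{h_F}$ is a bisimulation and the two coalgebras are bisimulatable.

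The only point requiring care, and what I expect to be the main (though minor) obstacle, is the claim that transporting the structure along the bijection $\pi_1$ simultaneously makes \emph{both} projections homomorphisms. I would handle this directly through the graph construction rather than by invoking preservation of weak pullbacks by $\mathcal{W}$: once $\pi_1$ is an isomorphism of coalgebras by construction, the homomorphism property of $\pi_2$ follows purely formally from Lemma~\ref{incrementalproof} and the closure of coalgebra homomorphisms under composition, so no hypothesis on the semiring $\mathbb{S}$ beyond those already in force is needed.
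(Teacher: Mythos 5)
Your proposal is correct, and it has the same two-step skeleton as the paper's proof: by Lemma~\ref{incrementalproof} a grammar morphism $F$ yields a coalgebra homomorphism $h_F = \mathcal{I}(F) : Ar(\bf{C}_G) \to Ar(\bf{C}_{G'})$, and the existence of a homomorphism in turn yields a bisimulation. The difference lies entirely in the second step: the paper disposes of it by citing \cite{levycoalgebra}, whereas you prove it from scratch via the graph construction, transporting the coalgebra structure of $Ar(\bf{C}_G)$ along the bijection $\pi_1 : \Gamma_{h_F} \to Ar(\bf{C}_G)$ so that $\pi_1$ becomes a coalgebra isomorphism and $\pi_2 = h_F \circ \pi_1$ is a composite of homomorphisms. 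Your inline argument is sound (it is exactly the ``graph of a homomorphism is a bisimulation'' lemma of \cite{Rutten:1996:UCT:869662}) and it makes explicit three things the paper's citation leaves implicit: (i) only one functor, in one direction, is needed, even though the statement hypothesises ``functors between them''; (ii) no preservation of weak pullbacks by $\mathcal{W}$ and no condition on the semiring $\mathbb{S}$ is required, since the graph argument works for an arbitrary endofunctor on $\mathbf{Set}$; (iii) the bisimulation produced is concrete --- it relates each $a \in Ar(\bf{C}_G)$ to $h_F(a)$ --- which matters because ``bisimulatable'' read as the bare existence of \emph{some} bisimulation would be vacuous (the empty relation is always one), so exhibiting the graph is what gives the corollary genuine content. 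The paper's version is shorter; yours is self-contained and pins down precisely what is being claimed.
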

\begin{proof}
If $(\bf{C}_G,r)$ and $(\bf{C}_G',r')$ have functors between them, then $\mathcal{I}(\bf{C}_G,r)$ and $\mathcal{I}(\bf{C}_G',r')$ have coalgebra homomorphisms between them. By \cite{levycoalgebra}, this implies that $\mathcal{I}(\bf{C}_G,r)$ and $\mathcal{I}(\bf{C}_G',r')$ are bisimulatable.
\end{proof}

\section{$\mathbb{R}_{\geq 0}$-Monoidal Grammars from Language Models}\label{rgrammars}


For some $\mathbb{R}_{\geq 0}$-monoidal grammar $(G, r)$, we can think of $r$ as assigning ``confidence values'' or ``likelihoods'' to the arrows in $G$. This allows us to model the relative probability of different interpretations of the same string of words and incorporate probabilistic and distributional language models into the parsing process.
Let us explore what this means.

\subsection{Maximum Parse State Likelihood}
\begin{definition}
A \emph{language model} over the monoidal grammar $G$ is the following family of probability distributions:
\begin{itemize}
    \item For any prefix $w_1 \otimes ... w_n$, a probability distribution with finite support over the possible completions $w_{n+1} \otimes ... w_m$ such that there is a parsing $w_1 \otimes ... w_n \otimes w_{n+1}  \otimes ... w_m \to s$. We will denote this as $Pr( w_{n+1} \otimes ... w_m | w_1 \otimes ... w_n)$.
    \item For any sentence $w_1 \otimes ... w_m$ with at least one parsing $a: w_1 \otimes ... w_m \to s$, a probability distribution  with finite support over the set of all such parsings. We will denote this as $Pr(a | w_1 \otimes ... w_m)$.
\end{itemize}

\end{definition}
We would like to develop a mapping from language models over a monoidal grammars $G$ to $\mathbb{R}_{\geq 0}$-monoidal grammars $(G,r)$ such that $r(p)$
is indicative of the probability that a particular parse state is ``optimal'' according to that language model. We can formalize this notion with the following concepts:
\begin{definition}
For some prefix $w_1 \otimes ... w_n$ equipped with a parse state $p: w_1 \otimes ... w_n \rightarrow o$ and sentence $w_1 \otimes ... w_n \otimes w_{n+1}  \otimes ... w_m$ equipped with a parsing $a: w_1 \otimes ... w_n \otimes w_{n+1}  \otimes ... w_m \to s$, we say that $a$ is \emph{$p$-compliant} if there exist some arrows $a_h: w_{n+1}  \otimes ... w_m \rightarrow o', a_v: o \otimes o' \rightarrow s$ such that $a = a_v \circ (p \otimes a_h)$.
\end{definition}
\begin{definition}
Given some prefix $w_1 \otimes ... w_n$ and sentence $w_1 \otimes ... w_n \otimes w_{n+1} \otimes ... w_m$ such that there exists a parsing $a: w_1 \otimes ... w_n \otimes w_{n+1} \otimes ... w_m \rightarrow s$, we will define a \emph{maximal parse state} for $w_1 \otimes ... w_n$ on $a$ to be a parse state $p: w_1 \otimes ... w_n \rightarrow o$ such that $a$ is $p$-compliant and there exists no $c$ such that $c \circ p$ is not isomorphic to $p$ and $a$ is ($c \circ p$)-compliant.
\end{definition}
Intuitively, the parsing $a$ is $p$-compliant if the parse state $p$ factors $a$, and $p$ is a maximal parse state for $a$ if $p$ cannot be developed any farther while still maintaining the possibility to extend it into $a$. We can use these constructs to define the following quantity:
\begin{definition}
Given a parse state $p: w_1 \otimes ... w_n \rightarrow o$, its \emph{maximal parse state likelihood} $pr_M(p)$ is:
\begin{align*}
    pr_M(p) = \sum_{a\in A_{p}} Pr(a | w_1 \otimes ... w_n \otimes w_{n+1} \otimes ... w_m) Pr(w_{n+1} \otimes ... w_m | w_1 \otimes ... w_n)
\end{align*}
Where $A_{p}$ is the set of all parsings $a$ such that $p$ is a maximal parse state for $w_1 \otimes ... w_n$ on $a$.
\end{definition}
Intuitively, given a prefix $w_1 \otimes ... w_n$, a parse state $p: w_1 \otimes ... w_n \rightarrow o$, and a language model over the possible fully parsed sentences, $pr_M(p)$ is the probability that $p$ will be the maximal parse state for the completed sentence.

Now say we want to construct an $\mathbb{R}_{\geq 0}$-monoidal grammar $(G,r)$ such that for any parse state, $r(p)$ is as close as possible to $pr_M(p)$. By the definition of $\mathbb{S}$-monoidal grammars, $r$ is defined by the values of $r(g)$ for all generating arrows $g \in \Sigma_1$. For example, for any parse state $p$ in $G$, we have that:
\begin{align*}
    r(p) = \prod_{g \in B_{p}} r(g)
\end{align*}
Here $B_{p}$ is any bag of generating arrows in $\Sigma_1$ that form $p$ when combined by composition and tensor product.

Therefore, our objective is to find $r$ to minimize the following expression, where $P_G$ is the set of all parse states in $G$:
\begin{align*}
    min_{r} \norm{\sum_{p \in P_G} \left(
    pr_M(p) - \prod_{g \in B_{p}} r(g)
    \right)}
\end{align*}
Because $log$ is a monotonic function, we can rewrite this as the following, where $B_{p}(g)$ is the number of times that $g$ appears in $B_{p}$:
\begin{align*}
    min_{r} \norm{\sum_{p \in P_G} \left(
    log(pr_M(p)) - \sum_{g \in \Sigma_1} log(r(g)) * B_{p}(g)
    \right)}
\end{align*}
This is a standard linear approximation of the coefficients $log(r(g))$. If $P_G$ is finite, we can use the normal equations or gradient descent to learn the values of $r(g)$. If not, we can use stochastic gradient descent \cite{Bishop:2006:PRM:1162264}.





\section{Future Work}\label{futurework}
We will close with some remarks on future directions.
\begin{itemize}
    \item In section \ref{rgrammars} we describe a simple algorithm for constructing an $\mathbb{R}_{\geq 0}$-monoidal grammar from a language model over a monoidal grammar. However, since we do not describe a method for constructing such language model from a dataset and choice of monoidal grammar, it is not immediately clear how we can implement this algorithm in practice.
    Such a method would enable us to efficiently implement our algorithm on top of any linguistic parsing library for CFGs, pregroups, or other monoidal grammars.

    \item In section \ref{incrementalgrammars} we introduce a functor $\mathcal{I}$ from $Grammar_{\mathbb{S}}$ to the coalgebras of the functor $\mathcal{W} = \ssxv$. A natural question is whether there are functors in the reverse direction as well. In particular, we are curious as to whether $\mathcal{I}$ has left/right adjoints. If such functors exist, this would be further evidence of a fundamental connection between dynamical systems and monoidal categories. Such a discovery could help us unite these fields of study, as well as enable us to reduce questions about automata into questions about monoidal categories.

    \item Our framework is fully syntax-based and is currently agnostic to semantic meaning.
    In \cite{DBLP:journals/corr/abs-1003-4394}, the authors define a functorial framework for affiliating syntactic structure with vector space semantics, and in \cite{SadrzadehEtAl18a} the authors project vector space semantics onto dynamic syntax trees in order to model incrementality. This raises the question of how we can best incorporate semantic information into our framework. For example, if we extend the functor from \cite{DBLP:journals/corr/abs-1003-4394} to operate over monoidal grammars in general, it would be interesting to see how the interaction between this functor and $\mathcal{I}$ compares to the construction in \cite{SadrzadehEtAl18a}.

\end{itemize}

\footnotesize
\bibliography{body.bib}
\bibliographystyle{alpha}
\end{document}